\definecolor{webgreen}{rgb}{0,.5,0}
\definecolor{webbrown}{rgb}{.8,0,0}
\definecolor{emphcolor}{rgb}{0.95,0.95,0.95}
\ifpdf \hypersetup{pdftex,
            pdfstartview=FitH, 
            bookmarksopen=true,
            bookmarksnumbered=true
} \else \hypersetup{dvips} \fi
\numberwithin{equation}{section}
\newtheorem{proposition}{Proposition}[section]
\newtheorem{remark}{Remark}[section]
\newtheorem{assump}{Assumption}[section]
\newcommand {\R}{\mathbb{R}}
\newcommand {\Fb}{\mathbb{F}}
\newcommand {\F}{\mathcal{F}}
\newcommand {\p}{\mathbb{P}}
\newcommand {\E}{\mathbb{E}}
\newcommand{\diff}{{\rm d}}
\newcommand{\conn}{\quad\text{and}\quad}
\newcommand{\1}{\mbox{1}\hspace{-0.25em}\mbox{l}}
\newcommand{\lev}{L\'{e}vy }
\title{An Excursion-Theoretic Approach to Regulator's Bank Reorganization Problem}
\author[M. Egami]{Masahiko Egami}
\address[M. Egami]{Graduate School of Economics,
Kyoto University, Sakyo-Ku, Kyoto, 606-8501, Japan}
\email{egami@econ.kyoto-u.ac.jp}
\urladdr{http://www.econ.kyoto-u.ac.jp/{\textasciitilde}egami/}
\thanks{First draft: \today.  \\ This work is in part supported
by Grant-in-Aid for Scientific Research (B) No.\ 22330098, Japan Society for the Promotion of Science.}
\author[T. Oryu]{Tadao Oryu}
\address[T. Oryu]{Graduate School of Economics,
Kyoto University, Sakyo-Ku, Kyoto, 606-8501, Japan}
\email{oryu.tadao.27r@st.kyoto-u.ac.jp}
\date{}
\begin{document}

\begin{abstract}
The importance of  the global financial system cannot be exaggerated.  When a large financial institution becomes problematic and is bailed out, that bank is often claimed as ``too big to fail".  On the other hand, to prevent bank's failure, regulatory authorities adopt the Prompt Corrective Action (PCA) against a bank that violates certain criteria, often measured by its leverage ratio.  In this article, we provide a framework where one can
analyze the cost and effect of PCA's.  We model a large bank with deteriorating asset and regulatory actions attempting to prevent a failure.  The model uses the excursion theory of \lev processes and finds an optimal leverage ratio that triggers a PCA. A nice feature includes it incorporates the fact that social cost associated with PCA's are be greatly affected by the size of banks subject to PCA's, so that one can see the cost of rescuing a bank ``too big to fail". 
\end{abstract}
\maketitle \noindent \small{\textbf{Key words:} Prompt corrective actions; excursion theory; spectrally negative \lev processes; scale functions\\
\noindent JEL Classification:  D81, G21, G32 \\
\noindent Mathematics Subject Classification (2010) : Primary: 60G40
Secondary: 60J75 }\\
\section{Introduction}
For a description of the Prompt Corrective Action (PCA, thereafter) we first quote from Shibut et al \cite{pca}: The Prompt Corrective Action (PCA) provisions in Federal Deposit Insurance Corporation Improvement Act of 1991 (FDICIA) require that regulators set a \emph{threshold} for critically undercapitalized institutions, and that regulators promptly close institutions that breach the threshold unless they quickly recapitalize or merge with a healthier institution.� Many economists expected these provisions to result in dramatically reduced loss rates, or even zero loss rates, for bank failures. In short, PCA provides a set of mandatory and discretionary actions to be taken by  banking supervisors when the bank's capital ratio is declining.  In many countries, as the above says, regulatory authorities set minimum capital ratio and intervene bank operations once the bank's capital falls short of the minimum requirement.

Available studies on PCA are very scarce.  Kocherlakova and Shim \cite{kocherlakota2007} and Shim \cite{shim2011} develop dynamic contract models to analyze under what conditions regulators should liquidate a problematic bank or subsidize.  While liquidation is one alternative in PCA's, we need to analyze a broader spectrum of actions including recapitalization, cash infusion, and changes of risk-return profile of the bank's asset.  Considering the catastrophic turmoil in the financial systems we experienced in the recent global financial crisis, a comprehensive analytical framework for the regulators' interventions to a problem bank's management  is very much needed.
Our model has, among other things, the following features:
\begin{itemize}
  \item we incorporate \emph{leverage ratio} explicitly  to describe the situation where even a large bank (with high leverage) can  \emph{easily} fail and to directly deal with  leverage ratio threshold that triggers PCA;
  \item we use spectrally negative \lev processes for modeling sudden declines in the value of bank assets;
  \item we include cash infusions (at the beginning of PCA) that are often used for preventing  outright insolvency, change the bank's risk-return profile, and consider the possibilities whether the bank comes back to the normal operation or goes to liquidation;
  \item we incorporate various costs associated with PCA's and compute optimal threshold level that triggers a PCA in the sense that the total cost can be minimized; and
  \item we obtain a result, among other things, where the bank size has a crucial impact on the cost involved, which well captures the real-life experience.
\end{itemize}

In this paper, we describe deterioration of leverage ratio as the excursion from the running maximum.  It should be best to explain through an example.  We shall define everything rigourously in the next section.
Let $Y=e^X$ be the bank's total asset value, where $X$ is a spectrally negative \lev process and represents the fluctuation rate of the asset.  Let $S$ be the running maximum of $X$.  We assume  that the bank increases its asset base as long as it maintains the predetermined \emph{leverage ratio}, defined as
\[
\text{Leverage ratio}:=\frac{\text{Debt}}{\text{Total Asset}}.
\] Let us set the said leverage ratio as $e^{-b}$.
For example, if the bank has the initial asset of $Y_0=e^{X_0}=100$ with $e^{-b}=0.8$, it has total asset of $100$ financed by
debt $80$ and equity $20$.  We can think of this ratio as the maximum leverage ratio that is allowed by the banking regulations. We assume that the bank increases
its asset base as long as $X=S$ where $S$ is the running maximum of $X$ and that the bank's leverage ratio is maintained at $0.8$.  Hence if the asset value
appreciates to $120$, then this would provide the bank with more lending opportunity since the equity value is now $40$.  With this new equity level, the bank
increases its leverage up to $0.8$, that is, total asset increasing to $200$ financed by debt $160$ and equity $40$. Note that  $e^S=e^X=200$ and the debt level
is $e^{-b}(e^S)=e^{S-b}=160$. Now if the bank's asset deteriorates due to defaults in the lending portfolio, we would have $S-X>0$. In other words, there appears
an \emph{excursion from the running maximum of $S$}.  Since the asset level has been pegged at $e^S=200$, the bank's equity would be wiped out when $e^{S-b}=e^X$. That is, the process is absorbed at $t=T_b$, i.e, the first time $X$ goes below level $S-b$.
Moreover, note that this model can incorporate the regulatory requirements that the bank, when experiencing asset deterioration,  need to sell the assets in order to reduce the leverage. For example, assume that when the bank loses one dollar of asset, the bank  loses its equity by $\alpha$ and reduces its debt by $1-\alpha$, where $\alpha\in(0,1]$. Then, at the time the equity is wiped out, we have
\[
e^X\le e^S\left(1-\frac{1-e^{-b}}{\alpha}\right),
\]
that is, the process is absorbed when the excursion $S-X$ reaches to $\log\left(\frac{1-e^{-b}}{\alpha}-1\right)$.

An excursion theory for spectrally negative \lev processes has been developed recently.  See Bertoin \cite{Bertoin_1996} as a general reference.  More specifically, an exit problem of the reflected process $Y$ was studied by Avram et al. \cite{avram-et-al-2004}, Pistrorius \cite{Pistorius_2004} \cite{Pistorius_2007} and Doney \cite{Doney_2005}. For \emph{spectrally negative \lev processes}, or \lev processes with only negative jumps, a number of authors have succeeded in solving interesting stochastic optimization problems and in extending the classical results by using the \emph{scale functions}, which we shall review briefly later.  We just name a few here : \cite{Baurdoux2008,Baurdoux2009}  for stochastic games,  \cite{Avram_et_al_2007, Kyprianou_Palmowski_2007, Loeffen_2008}  for the optimal dividend problem, \cite{alili-kyp, avram-et-al-2004} for American and Russian options,  and \cite{Egami-Yamazaki-2010-1, Kyprianou_Surya_2007} for credit risk.

The rest of the paper is organized as follows.  In Section
\ref{sec:model}, we formulate a mathematical model to express PCA program and then
find an optimal threshold triggering level in Section \ref{sec:solution}.  We shall illustrate the solution through a numerical example in Section \ref{sec:example}.
Furthermore, we shall consider the situation where,  after the bank successfully emerges from the intervention (i.e., PCA ends), it again becomes problematic and subject to another PCA.  This is in Section \ref{sec:multiple}.
 We attach a brief summary of results from the theory of \emph{scale functions} associated with spectrally negative \lev processes in Appendix.

\section{Mathematical Model}\label{sec:model} 

Let the spectrally negative Levy processes $X^i=\{X^i_t;t\geq 0\}$ $(i=0,1)$ represent the state variable defined on the probability space
$(\Omega, \F, \p)$, where $\Omega$ is the set of all possible realization of the
stochastic economy, and $\p$ is a probability measure defined on $\F$. We denote by
$\mathbb{F}=\{\F_t\}_{t\ge 0}$ the filtration with respect to which $X^0$ and $X^1$ are adapted and with the usual
conditions being satisfied. The Laplace exponent $\psi_i$ of $X^i$, i=0,1, is given by
\[
\psi_i(\lambda)=\mu_i\lambda+\frac{1}{2}\sigma_i^2\lambda^2+\int_{(-\infty,0)}(e^{\lambda x}-1-\lambda x \1_{(x>-1)})\Pi_i(\diff x),
\]
where $\mu_i \geq 0$, $\sigma_i \geq 0$, and $\Pi_i$ is a measure concentrated on $\R\backslash \{0\}$ satisfying
$\int_{\R}(1\wedge x^2)\Pi_i(\diff x)<\infty$. It is well-known that $\psi_i$ is zero at the origin and convex on $\R_+$.

We define the process $X=\{X_t;t\geq 0\}$ as the solution
to the stochastic differential equation
\[
\diff X_t=\diff X^{I(t)}_t \conn X_0=x,
\]
where $I=\{I(t);t\geq 0\}$ is the right-continuous switching process which satisfies
$I(t)\in\{0,1\}$ for every $t\in\R_+$. We postpone (see \eqref{eq:I}) the rigorous mathematical definition of the process $I$ to make the explanation of our model smoother.

The bank's total asset value is represented by the process $Y=\{e^{X_t};t\geq 0\}$. Therefore, $X$ represents the fluctuation rate of the bank's total asset value. When $I(t)=0$,
the bank is well capitalized with satisfactory leverage ratio and thus is not subject to the regulator's PCA. Our $\diff X^0$ corresponds to
the dynamics while not being controlled. On the other hand, when $I(t)=1$, a PCA is applied and the bank is taken into
strict supervision by the regulator with  corresponding asset fluctuation rate $\diff X^1$.  In general, it may be often the case that
\[
\mu_1<\mu_0 \conn \sigma_1<\sigma_0
\]
to reflect more conservative risk-return profile during the PCA period.
We introduce $\Fb$-stopping times $\tau^+_c$ and $\tau^-_c$ $(c\geq 0)$ defined by
\[
\tau^+_{c}=\inf \{t\geq 0 : X^1_t \geq c\},\conn\tau^-_{c}=\inf \{t\geq 0 : X^1_t \leq c\}.
\]
In addition, let $S=\{S_t;t\geq 0\}$ be defined by $S_t=\sup _{u \in [0,t]} X_u\vee s$, and we introduce the $\Fb$-stopping times $T_c$ ($c>0$) defined by
\[
T_c=\inf \{t\geq 0 : S_t-X_t \geq c\}.
\]

We assume that PCA is applied (i.e. the process $I$ changes form $0$ to $1$) at $t=T_{b'}$, where $b\geq b'\geq 0$. Note that this is the time when the bank's leverage ratio $e^{S-b}/e^X$ exceeds the level $e^{b'-b}$ (not the level $e^{-b'}$). Indeed, since  $X<S-b'$, the leverage ratio is
\[
\frac{e^{S-b}}{e^X} > \frac{e^{S-b}}{e^{S-b'}}=e^{b'-b}.
\]
This threshold $b'$ is determined by the regulator and we call it the \emph{PCA trigger level}.  One of our main problems is to  calculate of the total cost involved in PCA's until the bank's lifetime.
When the bank undergoes a PCA, one of the two scenarios is possible: the bank becomes insolvent ($S-X\geq b$), or the bank successfully improves its leverage ratio to $e^{-b}$ ($S-X=0$).

Additionally, we assume that when a PCA is applied, the bank's asset is pushed up to the target level
\[e^{S_{T_{b'}}-a}, \quad \text{$a\in[0,b)$ is some constant}.\]
This is done by injecting funds (taxpayers' money) to improve the leverage ratio to some predetermined level $e^{a-b}$.  More specifically, since the PCA moves the level of $X$ to $S_{T_{b'}}-a$ and the leverage ratio at the time of PCA is $S_{T_{b'}-b}$, then the new leverage ratio becomes
\[
\frac{e^{S_{T_{b'}-b}}}{e^{S_{T_{b'}}-a}}=e^{a-b}.
\]
The amount of money to be poured (as equity) is thereby
the difference between the asset values before and after PCA is applied.

To compute the initial cost to be paid, we need to record both of the asset values before and after being pushed up. Therefore, we define the random variable $\underline{X}$ as $\underline{X}=X_{T_{b'}}$ and, afterward, redefine $X_{T_{b'}}$ by $X_{T_{b'}}=S_{T_{b'}}-a$. That is, $e^{\underline{X}}$ represents the asset value \emph{before} the cash infusion is made, and $e^{X_{T_{b'}}}$ indicates the asset value \emph{after} the bank receives fresh money. This way, $e^X$ remains representing the asset value of the bank, and we can represent the initial cash to be poured by
\[e^{S_{T_{b'}}-a}-e^{\underline{X}}.\]  Let us emphasize that this value is large when the size of the bank is large (with a large value of $e^S$).
Note that though we use this form of initial cost throughout this paper, the initial cost can be generalized to the form of  $c(\underline{X},S_{T_{b'}})$, where $c:\R^2_+\mapsto\R_+$, and in those cases, this cost may include, for example, the cash to be set aside for the worst scenario of the bank's insolvency (the bank's depositors will be bailed out by the FDIC),
or the present value of administration costs to alter bank's risk-return profile from $\psi_0$ to $\psi_1$.

Let an $\mathbb{F}$-stopping time $\tau$ be the time PCA ends, then $\tau$ should be represented by $\tau=T_{b'}+(\tau^+_a\wedge \tau^-_{a-b})\circ \theta_{T_{b'}}$, where $(\theta_t)_{t\in\R_+}$ is shift operator, and the process $I$ can be defined as
\begin{align}\label{eq:I}
I(t) =
\begin{cases}
0\quad \text{for}\quad t<T_{b'}, \tau \leq t\\
1\quad \text{for}\quad t\in[T_{b'},\tau).
\end{cases}
\end{align}
In summary, at time $t=T_{b'}$, the bank's leverage ratio becomes worse than the PCA trigger level, then a PCA starts and the bank goes under the regulator's control. The corresponding excursion height is $S_{T_{b'}}-\underline{X}$. Then the regulatory authority injects in the amount of $e^{S_{T_{b'}}-a}-e^{\underline{X}}$ and the bank's leverage ratio is improved to $e^{a-b}$.
To recover its leverage ratio to $e^{-b}$, $X^1$ must go up in the amount of $a$, and the time is denoted by $\tau^+_a$. However, if $X^1$ goes down
in the amount of $b-a$, the bank becomes insolvent, and the corresponding time is $\tau^-_{a-b}$.

In addition to the initial cost, there will be the running cost over time while PCA continues and the social cost (the penalty) when the bank becomes insolvent. We assume that (1) the running cost will be incurred in proportion to the duration of PCA being in place and the ratio is $\alpha\geq0$, and (2) the social cost to the economy caused by the bank's final insolvency is determined by $\beta e^{S_{T_b}-b}$ ($\beta\geq0$), that is,  some parameter $\beta$ times the bank's asset when it becomes insolvent. This represents the fact that the cost of bank's failure gets larger as the size of the bank is large.  Note that this penalty function may be generalized as the initial cost function.
Finally,  the expected cost $C_1$ for the PCA can be represented by
\begin{equation}\label{eq:total-cost}
C_1(x,s;b')=\E^{x,s}\left[e^{-qT_{b'}}\left(e^{S_{T_{b'}}-a}-e^{\underline{X}}\right)+\alpha\int_{T_{b'}}^{\tau}e^{-qt}\diff t + e^{-q\tau} \left(\beta e^{S_{T_{b'}}-b}\right)\1_{\{\tau^+_{a}\circ \theta_{T_{b'}} > \tau^-_{a-b}\circ \theta_{T_{b'}}\}}\right],
\end{equation} which we shall calculate in the next section.  Moreover, we analyze the cost minimizing PCA trigger level given the values of $a$.

\section{Solution}\label{sec:solution}
To solve the problem, we divide the cost function $C_1$ into three blocks;
\[
C_1(x,s;b')=
\begin{cases}
C^0_1(x,s;b') & \text{ if } s-x>b',\\
C^1_1(x,s;b') & \text{ if } s-x=0,\\
C^2_1(x,s;b') & \text{ if } s-x\in(0,b').
\end{cases}
\]
As the first step, we calculate $C^0_1(x,s;b')$, the cost involved in the PCA when $S_0-X_0\geq b'$; in other words, PCA is applied at time $t=0$. In particular, this is always the case when $b'=0$. The explanation of scale functions in the following lemma will be attached in the appendix.
Note that the scale functions of $X^0$ and $X^1$ are explicitly known in some cases including the case they have no jumps
(see Hubalek and Kyprianou \cite{Hubalek_Kyprianou_2009} for example).

\begin{proposition}\label{C1_0}
If $S_0-X_0\geq b'$, then
\begin{eqnarray}\label{eq:C1_0}
C^0_1(x,s;b')&=&e^{s-a}-e^x+\frac{1}{q}\left(1
-Z_1^{(q)}(b-a)-(1-Z_1^{(q)}(b))\frac{W_1^{(q)}(b-a)}{W_1^{(q)}(b)}\right)\\&&+\beta e^{s-b} \left(Z_1^{(q)}(b-a)-Z_1^{(q)}(b)\frac{W_1^{(q)}(b-a)}{W_1^{(q)}(b)}\right),\nonumber
\end{eqnarray}
where $W^{(q)}_i$, $i=1,2$, is $q$-scale function of $X^i$, and
\[
Z_i^{(q)}(x)=1+q\int^x_0W_i^{(q)}(y)\diff y.
\]
\end{proposition}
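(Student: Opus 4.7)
The first observation is that under the hypothesis $S_0-X_0\geq b'$ we have $T_{b'}=0$ almost surely, so in \eqref{eq:total-cost} one may substitute $S_{T_{b'}}=s$ and $\underline{X}=X_{T_{b'}}=x$ directly. The initial cash-infusion term $e^{s-a}-e^x$ then comes out without calculation. Once $X$ is pushed up to $s-a$ and the dynamics switch to $X^1$, the Markov property together with the translation invariance of a \lev process reduces the two remaining expectations to quantities attached to a fresh copy of $X^1$, with recovery/insolvency encoded by $\tau^+_a$ and $\tau^-_{a-b}$.

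The key step is to translate coordinates so that the two-sided exit problem lives on the canonical interval $[0,b]$. Setting $\hat X_t:=X^1_t+(b-a)$, the process $\hat X$ is spectrally negative \lev with the same scale functions $W_1^{(q)},Z_1^{(q)}$, starts at $b-a$, and the events $\{\tau^+_a<\tau^-_{a-b}\}$ and $\{\tau^-_{a-b}<\tau^+_a\}$ become $\{\hat\tau^+_b<\hat\tau^-_0\}$ and $\{\hat\tau^-_0<\hat\tau^+_b\}$ respectively. The standard two-sided exit identities for spectrally negative \lev processes (to be recorded in the appendix) then supply
\begin{equation*}
P_1 \;:=\; \E^{b-a}\!\left[e^{-q\hat\tau^+_b}\1_{\{\hat\tau^+_b<\hat\tau^-_0\}}\right]=\frac{W_1^{(q)}(b-a)}{W_1^{(q)}(b)},
\end{equation*}
\begin{equation*}
P_2 \;:=\; \E^{b-a}\!\left[e^{-q\hat\tau^-_0}\1_{\{\hat\tau^-_0<\hat\tau^+_b\}}\right]=Z_1^{(q)}(b-a)-Z_1^{(q)}(b)\frac{W_1^{(q)}(b-a)}{W_1^{(q)}(b)}.
\end{equation*}

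The penalty term contributes exactly $\beta e^{s-b}P_2$, which matches the last bracketed expression in \eqref{eq:C1_0}. For the running cost, the identity $\int_0^T e^{-qt}\diff t=q^{-1}(1-e^{-qT})$ applied to $T=\hat\tau^+_b\wedge\hat\tau^-_0$ yields $q^{-1}(1-P_1-P_2)$; a short rearrangement gives the first bracket in \eqref{eq:C1_0}. Summing the three pieces produces the claimed formula. No step is truly difficult: all of the nontrivial analytic content is imported from the scale-function toolkit for spectrally negative \lev processes, and the only mild care required is to keep the shift straight, so that the correct initial point after the translation is $b-a$ and the correct interval length is $b$.
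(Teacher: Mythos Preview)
Your proof is correct and follows essentially the same route as the paper's: both note that $T_{b'}=0$ so the initial cost is $e^{s-a}-e^x$, rewrite the running cost via $\int_0^T e^{-qt}\diff t=q^{-1}(1-e^{-qT})$, and then invoke the standard two-sided exit identities for spectrally negative \lev processes to obtain $P_1=W_1^{(q)}(b-a)/W_1^{(q)}(b)$ and $P_2=Z_1^{(q)}(b-a)-Z_1^{(q)}(b)W_1^{(q)}(b-a)/W_1^{(q)}(b)$. The only cosmetic difference is that you make the shift $\hat X=X^1+(b-a)$ explicit to land on the canonical interval $[0,b]$, whereas the paper quotes the identities directly.
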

\begin{proof}
Since $S_0-X_0\geq b'$, we have $T_{b'}=0$ and
\begin{eqnarray*}
\E^{x,s}\left[\int_{T_{b'}}^{\tau}e^{-qt}\diff t \right]
&=&\E^{s-a,s}\left[\int_0^{\tau^+_{a}\wedge \tau^-_{a-b}}e^{-qt}\diff t \right]\\
&=&\frac{1}{q}\left(1-\E^{s-a,s}\left[e^{-q(\tau^+_a\wedge \tau^-_{a-b})}\right] \right)\\
&=&\frac{1}{q}\left(1-\E^{s-a,s}\left[\1_{\{\tau^+_{a}<\tau^-_{a-b}\}}e^{-q\tau^+_{a}}\right]
-\E^{s-a,s}\left[\1_{\{\tau^+_{a}>\tau^-_{a-b}\}}e^{-q\tau^-_{a-b}}\right]\right).
\end{eqnarray*}
In the same way, we have
\begin{eqnarray*}
\E^{x,s}\left[e^{-q\tau} \left(\beta e^{S_{T_{b'}}-b}\right)\1_{\{\tau^+_{a}\circ \theta_{T_{b'}} > \tau^-_{a-b}\circ \theta_{T_{b'}}\}}\right]
&=&\beta e^{s-b} \E^{s-a,s}\left[\1_{\{\tau^+_{a} > \tau^-_{a-b}\}}e^{-q(\tau^+_a\wedge \tau^-_{a-b})}\right]\\
&=&\beta e^{s-b} \E^{s-a,s}\left[\1_{\{\tau^+_{a} > \tau^-_{a-b}\}}e^{-q\tau^-_{a-b}}\right]
\end{eqnarray*}
Then we can write
\begin{eqnarray}\label{eq:C0-interim}
C^0_1(x,s;b')&=&e^{s-a}-e^{x}+\frac{1}{q}\left(1-\E^{s-a,s}\left[\1_{\{\tau^+_{a}<\tau^-_{a-b}\}}e^{-q\tau^+_{a}}\right]
-\E^{s-a,s}\left[\1_{\{\tau^+_{a}>\tau^-_{a-b}\}}e^{-q\tau^-_{a-b}}\right]\right)\\&&+\beta e^{s-b} \E^{s-a,s}\left[\1_{\{\tau^+_{a} > \tau^-_{a-b}\}}e^{-q\tau^-_{a-b}}\right]\nonumber.
\end{eqnarray}
It is well known (see Kyprianou \cite{Kyprianou_2006} and Doney \cite{Doney_2005}) that
\begin{eqnarray*}
\E^{s-a, s}[\1_{\{\tau^+_{a}<\tau^-_{a-b}\}}e^{-q\tau^+_{a}}]&=&\frac{W_1^{(q)}(b-a)}{W_1^{(q)}(b)}, \conn\\
\E^{s-a, s}[\1_{\{\tau^+_{a} > \tau^-_{a-b}\}}e^{-q\tau^-_{a-b}}]&=&Z_1^{(q)}(b-a)-Z_1^{(q)}(b)\frac{W_1^{(q)}(b-a)}{W_1^{(q)}(b)}.
\end{eqnarray*}
Hence we have (\ref{eq:C1_0}).
\end{proof}

Now we calculate the cost in the case that $S_0=X_0=s$ by using Proposition \ref{C1_0}.
\begin{proposition}\label{C1_1}
If $b'>0$ and $S_0=X_0=s$, then
\begin{align}\label{eq:C1_1}
C^1_1(s,s;b')&=\frac{\sigma^2}{2}\left(\frac{(W_0^{(q)'}(b'))^2}{W_0^{(q)}(b')}-W_0^{(q)''}(b')\right)\int_s^{\infty}\diff m \exp\left(-(m-s)\frac{W_0^{(q)'}(b')}{W_0^{(q)}(b')}\right)C^0_1(m-b',m;b')\\
&+\iint_E\Pi(\diff h)\diff y \left(W_0^{(q)'}(y)-\frac{W_0^{(q)'}(b')}{W_0^{(q)}(b')}W_0^{(q)}(y)\right)\nonumber\\&\times\left(\int^{\infty}_{s}\diff m\exp\left(-(m-s)\frac{W_0^{(q)'}(b')}{W_0^{(q)}(b')}\right)C^0_1(m-y+h,m;b')\right),\nonumber
\end{align}
where $E=\{ (y,h)\in\R^2:0\leq y<b', y-b<h<y-b' \}$.
\end{proposition}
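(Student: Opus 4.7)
The plan is to reduce $C^1_1(s,s;b')$ to an expectation involving the cost at the instant $T_{b'}$ that the first excursion of $S^0-X^0$ reaches height $b'$, and then to evaluate that expectation through the It\^o excursion theory for the spectrally negative L\'evy process $X^0$.

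First, because $S_0=X_0=s$ we have $T_{b'}>0$ almost surely and throughout $[0,T_{b'})$ the controlled process $X$ coincides with $X^0$. At $T_{b'}$ the excursion height has reached $b'$, triggering a PCA in exactly the configuration covered by Proposition \ref{C1_0}. Applying the strong Markov property at $T_{b'}$ therefore gives
\[
C^1_1(s,s;b')=\E^{s,s}\bigl[e^{-qT_{b'}}\,C^0_1(\underline X,S_{T_{b'}};b')\bigr],
\]
and the whole problem reduces to finding the joint law of $(T_{b'},S_{T_{b'}},\underline X)$ under $\p^{s,s}$ weighted by the discount factor $e^{-qT_{b'}}$.

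Second, I would invoke the excursion theory for $X^0$ in the form developed in Bertoin \cite{Bertoin_1996} and used by the Pistorius and Doney papers cited in the introduction. Normalising local time at the supremum so that it coincides with the increments of $S^0$, the excursions of $S^0-X^0$ from zero form a Poisson point process governed by a $\sigma$-finite excursion measure $n^{(q)}$ after exponential $q$-killing. Because $T_{b'}$ is the starting time of the first excursion whose maximum height $\bar\epsilon$ exceeds $b'$, a standard compensation-formula argument shows that the local-time level $L_{T_{b'}}=S_{T_{b'}}-s$ is exponentially distributed with rate $n^{(q)}(\bar\epsilon>b')=W_0^{(q)'}(b')/W_0^{(q)}(b')$, producing the factor $\exp\bigl(-(m-s)W_0^{(q)'}(b')/W_0^{(q)}(b')\bigr)\,\diff m$ in the law of $S_{T_{b'}}$ at the level $m$. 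The realising excursion then splits into two disjoint events: it crosses $b'$ continuously (creeping), in which case $\underline X=S_{T_{b'}}-b'$; or it crosses $b'$ by a jump from some pre-jump height $y\in[0,b')$ with signed jump size $h<0$ satisfying $y-h>b'$, in which case $\underline X=S_{T_{b'}}-y+h$. The extra constraint $h>y-b$ cutting out $E$ restricts to jumps that do not instantaneously push $S-X$ beyond the insolvency floor at $b$, so that appending the continuation value $C^0_1$ from Proposition \ref{C1_0} is justified.

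Third, I would substitute the explicit forms of the excursion-measure densities for spectrally negative L\'evy processes in terms of the scale function $W_0^{(q)}$ and the L\'evy measure $\Pi$. The creeping part carries the mass $(\sigma^2/2)\bigl((W_0^{(q)'}(b'))^2/W_0^{(q)}(b')-W_0^{(q)''}(b')\bigr)$ at $\underline X=m-b'$, and the jump-crossing part carries the density $\bigl(W_0^{(q)'}(y)-W_0^{(q)'}(b')\,W_0^{(q)}(y)/W_0^{(q)}(b')\bigr)\,\Pi(\diff h)\,\diff y$ at $\underline X=m-y+h$ for $(y,h)\in E$. Plugging these into the strong-Markov representation and evaluating $C^0_1$ at the two forms of $\underline X$ yields precisely the creeping and jumping integrals in \eqref{eq:C1_1}. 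The main technical obstacle is exactly this identification of the excursion-measure densities---i.e.\ matching the compensator of the Poisson point process of excursions, written in its pre-jump height/jump-size parameterisation, to the stated scale-function expressions; once that identification is granted, the rest of the proof is a routine Fubini bookkeeping.
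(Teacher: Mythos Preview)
Your proposal is correct and follows essentially the same route as the paper. The only cosmetic difference is that the paper first decomposes the expected cost into its three constituents (initial cost, running cost, penalty), computes each separately via the Markov property at $T_{b'}$, and then sums them to reach the representation $C^1_1(s,s;b')=\E^{s,s}[e^{-qT_{b'}}C^0_1(\underline X,S_{T_{b'}};b')]$ (their equation~\eqref{eq:C1-in-terms-of-C0}), whereas you obtain this identity in one step; from that point on both proofs invoke the Pistorius densities for the creeping and jump-crossing parts to arrive at~\eqref{eq:C1_1}.
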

\begin{proof}
As for the initial cost part of \eqref{eq:total-cost}, we have, by splitting into the case where PCA trigger level $b'$ is continuously crossed and the case where it is overshot by a downward jump,
\begin{eqnarray*}
&&\E^{s,s}\left[e^{-qT_{b'}}\left(e^{S_{T_{b'}}-a}-e^{S_{T_{b'}}-\underline{X}}\right)\right]\\
&=&\int_s^{\infty}\diff m\E^{s,s}\left[e^{-qT_{b'}}\1_{\{ S_{T_{b'}}-\underline{X}=b', S_{T_{b'}}\in \diff m \}}\right]\left(e^{m-a}-e^{m-b'}\right)\\
&&+\iiint_D\E^{s,s}\left[e^{-qT_{b'}}\1_{\{ S_{T_{b'}-}-X_{T_{b'}-}\in\diff y, \underline{X}-X_{T_{b'}-}\in \diff h, S_{T_{b'}-}\in \diff m \}}\right]\left(e^{m-a}-e^{m-y+h}\right),
\end{eqnarray*}
where $D=\{ (m,y,h)\in\R^3 : 0\leq y<b', y-b<h<y-b', m>s \}$.
Note that $X_{T_{b'}-}$ is, as usual, the pre-jump position of $X$ at time $T_{b'}$.  Because of the Markov property of $X_0$ and $X_1$, we have
\begin{eqnarray*}
\E^{s,s}\left[\int_{T_{b'}}^{\tau}e^{-qt}\diff t \right]
&=&\E^{s,s}\left[e^{-qT_{b'}}\int_0^{(\tau^+_a\wedge \tau^-_{a-b})\circ \theta_{T_{b'}}}e^{-qt}\diff t \right] \\
&=&\E^{s,s}\left[e^{-qT_{b'}}\right]\E^{s, s}\left[\int_0^{(\tau^+_a\wedge \tau^-_{a-b})\circ \theta_{T_{b'}}}e^{-qt}\diff t \bigg|\mathcal{F}_{T_{b'}}\right]\\
&=&\E^{s,s}\left[e^{-qT_{b'}}\right]\E^{s-a, s}\left[\int_0^{\tau^+_a\wedge \tau^-_{a-b}}e^{-qt}\diff t \right]\\
&=&\E^{s,s}\left[e^{-qT_{b'}}\right]\frac{1}{q}\left(1-\E^{s-a, s}\left[e^{-q(\tau^+_a\wedge \tau^-_{a-b})}\right] \right)\\
&=&\E^{s,s}\left[e^{-qT_{b'}}\right]\frac{1}{q}\left(1-\E^{s-a, s}\left[\1_{\{\tau^+_{a}<\tau^-_{a-b}\}}e^{-q\tau^+_{a}}\right]
-\E^{s-a, s}\left[\1_{\{\tau^+_{a}>\tau^-_{a-b}\}}e^{-q\tau^-_{a-b}}\right]\right)\\
&=&\bigg(\int_s^{\infty}\E^{s,s}\left[e^{-qT_{b'}}\1_{\{ S_{T_{b'}}-\underline{X}=b', S_{T_{b'}}\in \diff m \}}\right]\\
&&+\iiint_D\E^{s,s}\left[e^{-qT_{b'}}\1_{\{ S_{T_{b'}-}-X_{T_{b'}-}\in\diff y, \underline{X}-X_{T_{b'}-}\in \diff h, S_{T_{b'}-}\in \diff m \}}\right]\bigg)\\
&&\times\frac{1}{q}\left(1-\E^{m-a, m}\left[\1_{\{\tau^+_{a}<\tau^-_{a-b}\}}e^{-q\tau^+_{a}}\right]
-\E^{m-a, m}\left[\1_{\{\tau^+_{a}>\tau^-_{a-b}\}}e^{-q\tau^-_{a-b}}\right]\right).
\end{eqnarray*}
On the penalty part, we have
\begin{eqnarray*}
&&\E^{s,s}\left[e^{-q\tau} \left(\beta e^{S_{T_{b'}}-b}\right)\1_{\{\tau^+_{a}\circ \theta_{T_{b'}} > \tau^-_{a-b}\circ \theta_{T_{b'}}\}}\right]\\
&=&\E^{s,s}\left[e^{-q(T_{b'}+\tau^-_{a-b}\circ \theta_{T_{b'}}} \left(\beta e^{S_{T_{b'}}-b}\right)\1_{\{\tau^+_{a}\circ \theta_{T_{b'}} > \tau^-_{a-b}\circ \theta_{T_{b'}}\}}\right]\\
&=&\int_s^{\infty}\E^{s,s}\left[e^{-qT_{b'}}\1_{\{ S_{T_{b'}}-\underline{X}=b', S_{T_{b'}}\in \diff m \}}\right]\E^{m-a, m}\left[\beta e^{m-b} \1_{\{\tau^+_{a} > \tau^-_{a-b}\}}e^{-q\tau^-_{a-b}}\right]\\
&&+\iiint_D\E^{s,s}\left[e^{-qT_{b'}}\1_{\{ S_{T_{b'}-}-X_{T_{b'}-}\in\diff y, \underline{X}-X_{T_{b'}-}\in \diff h, S_{T_{b'}-}\in \diff m \}}\right]\E^{m-a, m}\left[\beta e^{m-b} \1_{\{\tau^+_{a} > \tau^-_{a-b}\}}e^{-q\tau^-_{a-b}}\right].
\end{eqnarray*}
Then, by summing  those three parts, we can write in view of \eqref{eq:C0-interim}
\begin{eqnarray}\label{eq:C1-in-terms-of-C0}
\hspace{1cm}C^1_1(s,s;b')&=&\int_s^{\infty}\E^{s,s}\left[e^{-qT_{b'}}\1_{\{ S_{T_{b'}}-\underline{X}=b', S_{T_{b'}}\in \diff m \}}\right]C^0_1(m-b',m;b')\\
&&+\iiint_D\E^{s,s}\left[e^{-qT_{b'}}\1_{\{ S_{T_{b'}-}-X_{T_{b'}-}\in\diff y, \underline{X}-X_{T_{b'}-}\in \diff h, S_{T_{b'}-}\in \diff m \}}\right]C^0_1(m-y+h,m;b').\nonumber
\end{eqnarray}
Finally, it is known from Theorem 1 and 2 in Pistrorius \cite{Pistorius_2005} that
\[
\E^{s,s}\left[e^{-qT_{b'}}\1_{\{ S_{T_{b'}}-\underline{X}=b', S_{T_{b'}}\in \diff m \}}\right]=\frac{\sigma^2}{2}\left(\frac{W_0^{(q)'}(b')}{W_0^{(q)}(b')}-W_0^{(q)''}(b')\right)\exp\left(-(m-s)\frac{W_0^{(q)'}(b')}{W_0^{(q)}(b')}\right)\diff m,
\]
and
\begin{eqnarray*}
&&\E^{s,s}\left[e^{-qT_{b'}}\1_{\{ S_{T_{b'}-}-X_{T_{b'}-}\in\diff y, \underline{X}-X_{T_{b'}-}\in \diff h, S_{T_{b'}-}\in \diff m \}}\right]\\
&=&\Pi(\diff h)\diff y\diff m \left(W_0^{(q)'}(y)-\frac{(W_0^{(q)'}(b'))^2}{W_0^{(q)}(b')}W_0^{(q)}(y)\right)\exp\left(-(m-s)\frac{W_0^{(q)'}(b')}{W_0^{(q)}(b')}\right).
\end{eqnarray*}
Hence we have (\ref{eq:C1_1}).
\end{proof}
Note that some condition is needed for the finiteness of $C^1_1(s,s;b')$, and the following Remark shows it.
\begin{remark}\label{convergence}
{\rm
$C^1_1(s,s;b')<\infty$ if and only if $1- \frac{W_0^{(q)'}(b')}{W_0^{(q)}(b')}<0$.
}
\end{remark}
\begin{proof}
By Proposition \ref{C1_0}, we have
\begin{align*}
C^0_1(m-u,m;b')=&\left( e^{-a}-e^{-u}+\beta e^{-b} \left(Z_1^{(q)}(b-a)-Z_1^{(q)}(b)\frac{W_1^{(q)}(b-a)}{W_1^{(q)}(b)}\right) \right)e^{m}\\&+\frac{1}{q}\left(1
-Z_1^{(q)}(b-a)-(1-Z_1^{(q)}(b))\frac{W_1^{(q)}(b-a)}{W_1^{(q)}(b)}\right),\quad \text{ for } m\geq s \text{\, and \,} u\in[b',b).
\end{align*}
Hence, there are some positive constants $M_1, M_2<\infty$ with which we can write
\begin{align*}
&\int_s^{\infty}\diff m\exp\left(-(m-s)\frac{W_0^{(q)'}(b')}{W_0^{(q)}(b')}\right)C^0_1(m-u,m;b')\\
=&\int_s^{\infty}\diff m\left(M_1\exp\left( \left( 1-\frac{W_0^{(q)'}(b')}{W_0^{(q)}(b')} \right)m \right) + M_2\exp\left(-m\frac{W_0^{(q)'}(b')}{W_0^{(q)}(b')} \right)\right),\quad \text{ for } u\in[b',b),
\end{align*}
and therefore, the integral above is finite if and only if $ 1-W_0^{(q)'}(b')/W_0^{(q)}(b')<0$.
Since $y-h\in(b',b)$ on $E$ and the other terms in \eqref{eq:C1_1} satisfy
\begin{align*}
&\frac{\sigma^2}{2}\left(\frac{(W_0^{(q)'}(b'))^2}{W_0^{(q)}(b')}-W_0^{(q)''}(b')\right)<\infty,\conn\\
\iint_E\Pi(\diff h)&\diff y \left(W_0^{(q)'}(y)-\frac{W_0^{(q)'}(b')}{W_0^{(q)}(b')}W_0^{(q)}(y)\right)<\infty\quad \text{for \,} b'>0,
\end{align*}
we can say that $C^1_1(s,s;b')<\infty$ if and only if $1-W_0^{(q)'}(b')/W_0^{(q)}(b')<0$.
\end{proof}

Finally, we calculate $C^2_1(x,s;b')$ in the case that $S_0-X_0\in(0,b')$, by using Propositions \ref{C1_0} and \ref{C1_1}.
\begin{proposition}\label{C1_2}
If $b'>0$ and $S_0-X_0\in(0,b')$, then
\begin{align}\label{eq:C1_2}
C^2_1(x,s;b')&= \frac{\sigma^2}{2}\left(W_0^{(q)}(b'-s+x)-\frac{W_0^{(q)'}(b')}{W_0^{(q)}(b')}W_0^{(q)}(b'-s+x)\right)C^0_1(s-b',s;b')\\
&+\iint_E \Pi(\diff h)\diff y\left( \frac{W_0^{(q)}(b'-s+x)}{W_0^{(q)}(b')}W_0^{(q)}(y)-W_0^{(q)}(y-s+x) \right)C^0_1(s-y+h,s;b')\nonumber\\
&+\frac{W_0^{(q)}(b-s+x)}{W_0^{(q)}(b)}C^1_1(s,s;b').\nonumber
\end{align}
\end{proposition}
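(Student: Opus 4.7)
The plan is to decompose $C^2_1(x,s;b')$ according to how the pre-PCA process $X^0$ first leaves the interval $(s-b',s)$. Since $S_0-X_0\in(0,b')$, the switch $I:0\mapsto 1$ has not yet occurred, so the evolution up to $T_{b'}$ is governed solely by $X^0$, and its running supremum remains pinned at $s$ until $X^0$ reaches $s$ again. Three mutually exclusive scenarios partition the behaviour: (i) $\tau^+_s<T_{b'}$, i.e.\ the process creates a new supremum before triggering PCA; (ii) $T_{b'}<\tau^+_s$ with $X^0$ creeping out of the interval at $s-b'$, which requires $\sigma_0>0$; and (iii) $T_{b'}<\tau^+_s$ with $X^0$ overshooting $s-b'$ via a downward jump.

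On each scenario I would apply the strong Markov property at the corresponding exit time to express the remaining cost in terms of quantities already computed in Propositions \ref{C1_0} and \ref{C1_1}. On (i), restarting at $(s,s)$ gives residual cost $C^1_1(s,s;b')$. On (ii), the creep produces $\underline{X}=s-b'$ and $S_{T_{b'}}=s$, hence residual cost $C^0_1(s-b',s;b')$. On (iii), a jump from the pre-jump position $s-y$ to $s-y+h$ with $(y,h)\in E$ yields residual cost $C^0_1(s-y+h,s;b')$. A useful simplification relative to the proof of Proposition \ref{C1_1} is that on events (ii) and (iii) the supremum $S_{T_{b'}}$ is forced to equal $s$, which removes the outer $m$-integration present in \eqref{eq:C1_1}.

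It then suffices to plug in the standard two-sided exit identities for spectrally negative \lev processes (Kyprianou \cite{Kyprianou_2006}, Pistrorius \cite{Pistorius_2005}), evaluated on the shifted coordinate $\hat{x}:=x-(s-b')\in(0,b')$: the discounted upward-hit probability contributes $W_0^{(q)}(b'-s+x)/W_0^{(q)}(b')$; the discounted creeping density contributes $\tfrac{\sigma_0^2}{2}\bigl(W_0^{(q)\prime}(b'-s+x)-\tfrac{W_0^{(q)\prime}(b')}{W_0^{(q)}(b')}W_0^{(q)}(b'-s+x)\bigr)$; and the discounted joint triple law of $(S_{T_{b'}-}-X_{T_{b'}-},\underline{X}-X_{T_{b'}-})$ contributes density $\bigl(\tfrac{W_0^{(q)}(b'-s+x)}{W_0^{(q)}(b')}W_0^{(q)}(y)-W_0^{(q)}(y-s+x)\bigr)\Pi(\diff h)\,\diff y$ on $E$. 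Summing the three weighted contributions delivers \eqref{eq:C1_2}.

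The main obstacle is bookkeeping rather than any analytic subtlety: one has to shift Pistorius's interval-exit identities, typically stated for a process restarted at $0$ exiting $(0,c)$, back to the present interval $(s-b',s)$ with prescribed initial supremum $s$; keep track of $W_0^{(q)}$ versus $W_0^{(q)\prime}$ in the creeping prefactor; and verify that the overshoot density is correctly recovered after the $m$-reduction noted above. Once these ingredients are aligned, mutually exclusive conditioning and the strong Markov property assemble the three contributions into \eqref{eq:C1_2}.
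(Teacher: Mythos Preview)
Your approach is essentially identical to the paper's: decompose according to whether $X^0$ first hits $s$ (yielding $C^1_1(s,s;b')$), creeps below $s-b'$ (yielding $C^0_1(s-b',s;b')$), or jumps below $s-b'$ (yielding $C^0_1(s-y+h,s;b')$), apply the strong Markov property at the exit time, and invoke Pistorius's two-sided exit identities. Your prefactors for the upward-hit term (with $b'$ in the ratio) and the creeping term (with $W_0^{(q)\prime}$ in the first slot) in fact correct evident typos in the paper's displayed formula \eqref{eq:C1_2}.
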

\begin{proof}
In the case $S_0-X_0\in(0,b')$, two scenarios are possible. One is that $X$ reaches to $s$ before PCA is applied, and the other is that PCA applies before reaching $s$. Mathematically, this means that
\begin{eqnarray*}
C_1^2(x,s;b')&=&\E^{x,s}\left[e^{-q\tau^+_{s-x}}\1_{\{\tau^+_{s-x}<\tau^-_{s-x-b'}\}}C^1_1(s,s;b')\right]\\
&&+\E^{x,s}\left[e^{-q\tau^-_{s-x-b'}}\1_{\{\tau^+_{s-x}>\tau^-_{s-x-b'}\}}C^0_1(X_{\tau^-_{s-x-b'}},S_{\tau^-_{s-x-b'}};b')\right]\\
&=&\E^{x,s}\left[e^{-q\tau^+_{s-x}}\1_{\{\tau^+_{s-x}<\tau^-_{s-x-b'}\}}\right]C^1_1(s,s;b')\\
&&+\E^{x,s}\left[e^{-qT_{b'}}\1_{\{ S_{T_{b'}}-\underline{X}=b', S_{T_{b'}}=s \}}\right]C^0_1(s-b',s;b')\\
&&+\iint_E\E^{s,s}\left[e^{-qT_{b'}}\1_{\{ S_{T_{b'}-}-X_{T_{b'}-}\in\diff y, \underline{X}-X_{T_{b'}-}\in \diff h, S_{T_{b'}-}=s \}}\right]C^0_1(s-y+h,s;b').
\end{eqnarray*}
From Theorem 1 and 2 in Pistrorius \cite{Pistorius_2007} again, we have
\[
\E^{x,s}\left[e^{-qT_{b'}}\1_{\{ S_{T_{b'}}-\underline{X}=b', S_{T_{b'}}=s \}}\right]=\frac{\sigma^2}{2}\left(W_0^{(q)}(b'-s+x)-\frac{W_0^{(q)'}(b')}{W_0^{(q)}(b')}W_0^{(q)}(b'-s+x)\right),
\]
and
\[
\E^{s,s}\left[e^{-qT_{b'}}\1_{\{ S_{T_{b'}-}-X_{T_{b'}-}\in\diff y, \underline{X}-X_{T_{b'}-}\in \diff h, S_{T_{b'}-}=s \}}\right]=\Pi(\diff h)\diff y\left( \frac{W_0^{(q)}(b'-s+x)}{W_0^{(q)}(b')}W_0^{(q)}(y)-W_0^{(q)}(y-s+x) \right).
\]
Hence we have \eqref{eq:C1_2}.
\end{proof}
Now we have all three parts of $C_1$ and use this result for a numerical example.

\subsection{Example}\label{sec:example}
In this section, we solve a specific example. We assume $X^0$ and $X^1$ are Brownian motions with drifts and exponentially distributed jumps;
\begin{equation}\label{Example}
X_t=\mu t+\sigma B_t - \sum^{N_t}_{j=1}\epsilon_j,
\end{equation}
where $\mu, \sigma \geq 0$, $\epsilon_j$ are i.i.d. random variables which are exponentially distributed with parameter $\rho>0$ and $N_t$ is an independent Poisson
process with intensity $a>0$.
Before solving the problem, we introduce the explicit representation of the scale function for the process.
The Laplace exponent $\psi$ of $X$ has the  following simple representation;
\[
\psi(\lambda)=\frac{\sigma^2}{2}\lambda^2+\mu\lambda-\frac{a\lambda}{\rho+\lambda},\quad \lambda \geq 0.
\]
The equation $\psi(\lambda)=q$ ($q>0$) has three real solutions $\{\Phi(q),\alpha,\beta\}$ ($\Phi(q)>\alpha>\beta$), and the $q$-scale function $W^{(q)}$ of $X$ is given by
\begin{equation}\label{scalefunction}
W^{(q)}(x)=\frac{e^{\Phi(q)x}}{\psi'(\Phi(q))}+\frac{e^{\alpha x}}{\psi'(\alpha)}+\frac{e^{\beta x}}{\psi'(\beta)}.
\end{equation}

Let $\psi_i,\, i=0,1$ be defined by
\[
\psi_i(\lambda)=\frac{\sigma_i^2}{2}\lambda^2+\mu_i\lambda-\frac{a_i\lambda}{\rho_i+\lambda},\quad \lambda \geq 0.
\]
\begin{figure}[h]
\begin{center}
\begin{minipage}{0.8\textwidth}
\centering{\includegraphics[scale=0.8]{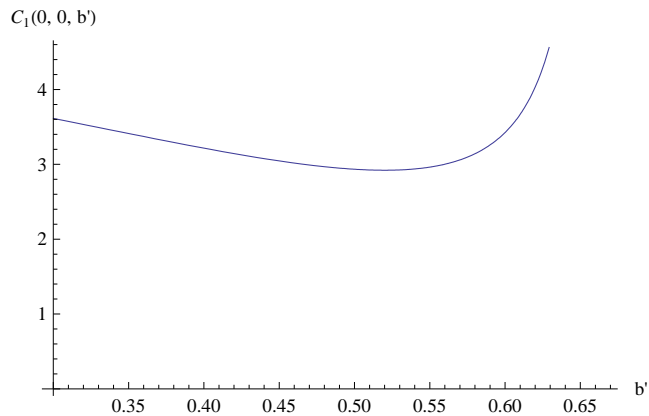}}
\end{minipage}
\caption{the graph of $C_1(0,0;b')$.}
\end{center}
\end{figure}
Figure 1 shows the graph of $C_1(0,0;b')$ with the settings $X_0=S_0=0$, $b=1$, $a=0.3$, $q=0.1$, $\alpha=\beta=1$, $\mu_0=0.2$, $\sigma_0=0.2$, $\mu_1=0.1$, $\sigma_1=0.1$,  $a_0=a_1=1$, and $\rho_0=\rho_1=10$. The assigned value of $b=1$ may not be realistic; $b=1$ means that the bank uses $e^{-1}=0.3679$ as its leverage ratio (i.e. (Debt)/(Total Asset)) during the normal business period. This is obviously too restrictive for the banks. Nontheless, we
use this number (and others) since function values are too sensitive to conduct comparative statics analysis with more realistic values like $b=-\log 0.8$.

In Figure 1, when we fix $a$ at $0.3$, $b^*=0.5401$ is the optimal PCA trigger level where $C_1(0,0;b')$ is minimized, and the corresponding cost is $C_1(0,0;b^*)=3.019$. An interesting character about $C_1(0,0;b')$ is that $C_1(0,0;b')$ diverges to infinity on $b'>0.6701$. Note that  at $b'=0.6701$, we have $W_0^{(q)'}(b')/W_0^{(q)}(b')=1$; see Remark \ref{convergence}.  This can be interpreted as follows: With a large $b'$, since it is not likely to initiate a PCA, $X$ shall be at very high level when a PCA is actually applied and therefore the initial cost part and the penalty part become too large. This result suggests the importance of discrete choice of PCA trigger level to prevent the bank from becoming too costly to rescue.  On the other hand, the reason that the cost is relatively high when $b'$ is near to $a$ is that, in these situations, PCA may be applied too early for the cost to be discounted enough.

\begin{figure}[h]
\begin{center}
\begin{minipage}{0.45\textwidth}
\centering{\includegraphics[scale=0.55]{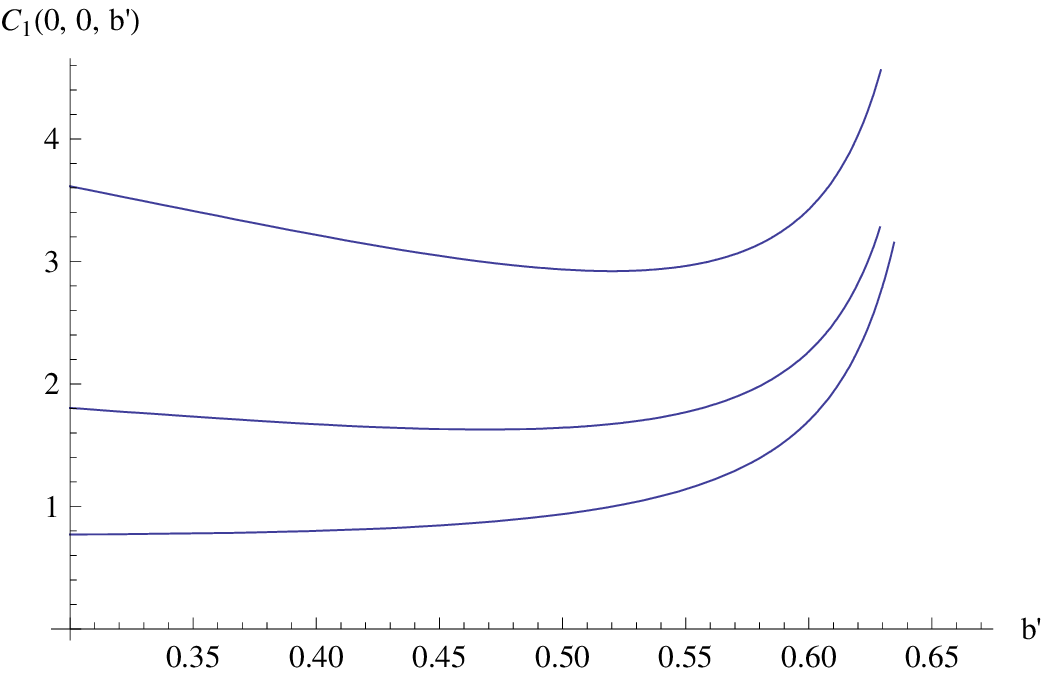}}\\
(i)$\sigma_1=0.1, 0.2, 0.4$ from top to bottom.
\end{minipage}
\begin{minipage}{0.45\textwidth}
\centering{\includegraphics[scale=0.55]{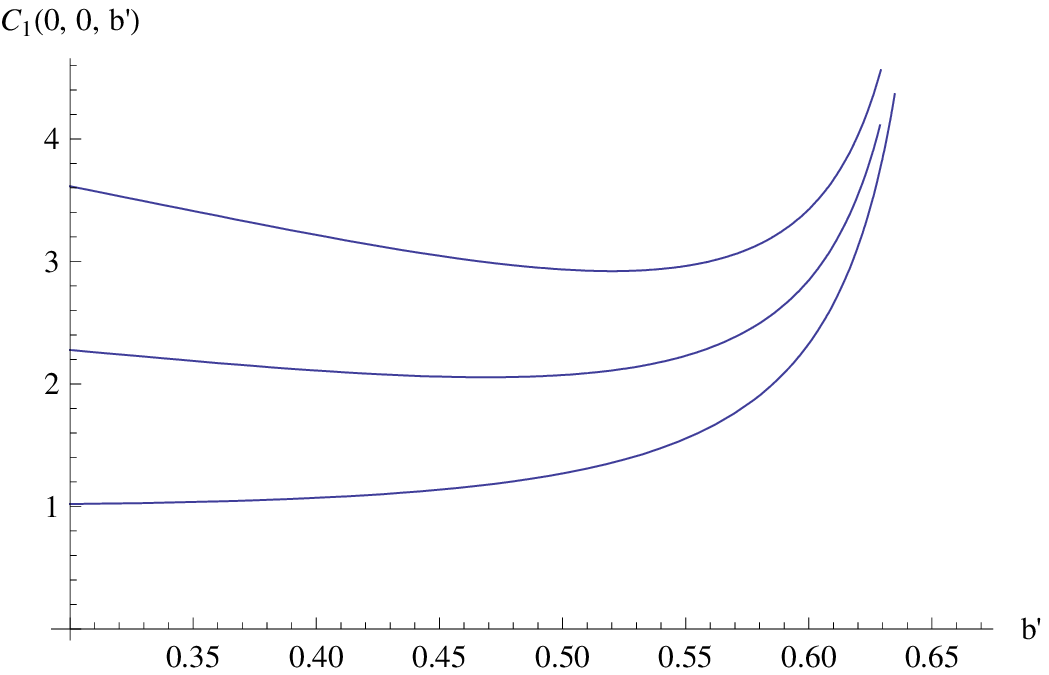}}\\
(ii)$\mu_1=0.1, 0.2, 0.4$ from top to bottom.
\end{minipage}
\caption{}
\end{center}
\end{figure}

Figure 2 (i) and (ii) are the results of some comparative statics: we vary $\sigma_1=0.1, 0.2, 0.4$ and $\mu_1=0.1, 0.2, 0.4$, respectively (the other parameters remain the same).
These graphs show that the increase of $\sigma_1$ and $\mu_1$ raises the optimal threshold $b^*$ and lower the optimized cost $C_1(0,0;b^*)$. The results from  various volatility parameters is worth mentioning.  With a larger $\sigma_1$, the speed to reach either $S$ or $S-b$ shall become greater, so that the running cost shall  be smaller.  The reason for a higher PCA trigger level with a higher $\sigma_1$ is so simple since there are several effects are involved.  However, it may be interpreted this way: since a higher volatility may increase the danger of becoming insolvent and ending up paying penalty (i.e., reaching $s-b$ after PCA was implemented), it would be safer to start a PCA earlier.

\begin{table}[htb]
  \begin{tabular}{|c|c|c|c|c|c|c|} \hline
    $a$ & 0.1 & 0.2 & 0.3 & 0.4 & 0.5 & 0.6 \\ \hline
    $b^*$ & 0.4593 & 0.5136 & 0.5401& 0.5566 & 0.5675 & 0.6 \\ \hline
    $C_1(0,0,b*)$ & 1.992 & 2.690 & 3.019 & 3.120 & 3.064 & 2.973  \\ \hline
    \end{tabular}\\
\caption{Changes of $b^*$ and $C_1(0,0;b^*)$}
\end{table}
\begin{figure}[h]
\begin{center}
\begin{minipage}{1\textwidth}
\centering{\includegraphics[scale=1.1]{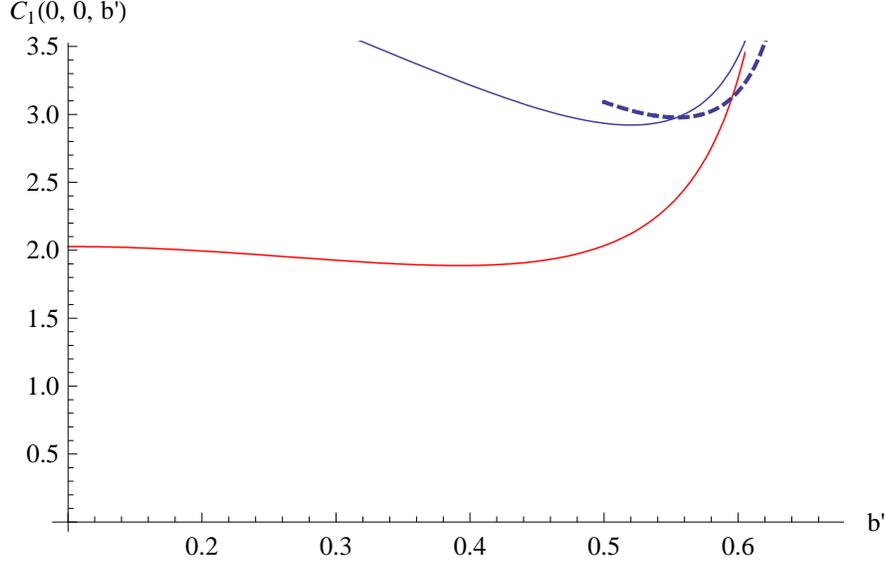}}
\end{minipage}
\caption{the graphs of $C_1(0,0;b')$ with $a=0.1,0.3,0.5$.}
\end{center}
\end{figure}

Finally, we change the level of $a$.  Table 1 shows $b^*$ and $C_1(0,0;b^*)$ with $a=0.1,0.2,\ldots,0.6$, and Figure 3 is the graphs of $C_1(0,0;b')$ with these $a$'s. The optimal threshold $b^*$ moves to the same direction with $a$, but the amount of change is small compared to that of $a$. On the other hand, the cost $C_1(0,0;b^*)$ doesn't change monotonously.  According to the table, around $a=0.4$, the cost has a local maximum.  One of interesting results is that when $a=0.6$, $b^*$ is a boundary solution $0.6$. This means that if $X$ continuously crosses the level $S-b'$, indeed PCA is applied but the regulator doesn't pay initial cost and there doesn't occur positive jump of $X$.

\section{Extension to multiple PCA's}\label{sec:multiple}
We considered so far that PCA is applied only once and calculated the cost associated with it. However, after the  bank recovers its leverage ratio to $e^{-b}$ thanks to a PCA, it can be under the regulator's control again when the leverage ratio deteriorates to $e^{b'-b}$ or worse ($S-X\geq b'$). Now we incorporate the possibility that PCA's are  repeatedly applied until the bank becomes finally insolvent. With the method we shall provide here, while it is not of an explicit form,
one can recursively calculate the cost for multiple PCA's.
For a mathematical representation, we redefine the process $I$ by
\[
I(t)=\1_{\{\tau_1\leq t<\tau_2\}}+\1_{\{\tau_3\leq t<\tau_4\}}+ \cdots + \1_{\{\tau_{2n-1}\leq t<\tau_{2n}\}} +\cdots, \quad t\in\R_+
\]
where $\tau_n, \, n=1,2,\ldots$ are $\Fb$-stopping times defined recursively by $\tau_1=T_{b'}$,
\begin{eqnarray*}
\tau_{2n}&=&\tau_{2n-1}+(\tau^+_a\wedge\tau^-_{a-b})\circ \theta_{\tau_{2n-1}},\\
\tau_{2n+1}&=&\tau_{2n} + T_{b'}\circ \theta_{\tau_{2n}},
\end{eqnarray*}
and $(\theta_t)_{t\in\R_+}$ is shift-operator. This definition means that the bank goes under the $n$th PCA at time $\tau_{2n-1}$, and recovers or becomes insolvent at time $\tau_{2n}$. Note that $\tau=\tau_2$.

Additionally, we need the asset values at time $\tau_{2n-1}$ before pushed up, so in the same way as Section 3, we define $\underline{X}_n=X_{\tau_{2n-1}}$\, for $n=1,2,\ldots$ and then, redefine $X_{\tau_{2n-1}}=S_{\tau_{2n-1}}-a$.

Then the cost $C_n(x,s;b')$ for the $n$th PCA can be represented by
\begin{eqnarray*}
C_n(x,s,b')&=&\E^{x,s}\Bigg[\1_{A_n} \bigg(e^{-q\tau_{2n-1}}\left(e^{S_{\tau_{2n-1}}-a}-e^{\underline{X}_n}\right)+\alpha\int_{\tau_{2n-1}}^{\tau_{2n}}e^{-qt}\diff t \\
&&+ e^{-q\tau_{2n}} \left(\beta e^{S_{\tau_{2n-1}}-b}\right)\1_{\{\tau^+_{a}\circ \theta_{\tau_{{2n-1}}} > \tau^-_{a-b}\circ \theta_{\tau_{{2n-1}}}\}}\bigg)\Bigg],
\end{eqnarray*}
where $A_n$ is the event in which the bank is under regulator's strict supervision more than $n$ times until insolvency; that is, $A_n$ can be written by $A_1=\Omega$ and $A_n=\bigcap_{k=1}^{n-1}\{\tau^+_{a}\circ \theta_{\tau_{2k-1}} < \tau^-_{a-b}\circ \theta_{\tau_{2k-1}}\}$ for $n=2,3,\ldots$,
and the total cost $C(x,s,b')$ is given by
\begin{eqnarray*}
C(x,s;b')&=&\sum_{n=1}^{\infty}C_n(x,s;b').
\end{eqnarray*}

\begin{proposition}\label{Cn_1}
If $S_0=X_0=s$, then
\begin{align}\label{eq:Cn_1}
C_{n+1}(s,s;b')&=\frac{W_1^{(q)}(b-a)}{W_1^{(q)}(b)}\int_s^{\infty}\diff m \exp\left(-(m-s)\frac{W_0^{(q)'}(b')}{W_0^{(q)}(b')}\right)C_n(m,m;b')\\
&\times \left(\frac{\sigma^2}{2}\left(\frac{(W_0^{(q)'}(b'))^2}{W_0^{(q)}(b')}-W_0^{(q)''}(b')\right)+\iint_E\Pi(\diff h)\diff y \left(W_0^{(q)'}(y)-\frac{W_0^{(q)'}(b')}{W_0^{(q)}(b')}W_0^{(q)}(y)\right)\right),\nonumber
\end{align}
for $n=1,2,\ldots$ .
\end{proposition}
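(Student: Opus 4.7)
The plan is to mimic the proof of Proposition~\ref{C1_1}, but with an additional layer of restarting: iterate the strong Markov property first at $\tau_2$ and then at $\tau_1 = T_{b'}$, and finally invoke the Pistorius identities quoted in the proof of Proposition~\ref{C1_1}. The conceptual starting point is that the cost contribution of the $(n+1)$-th PCA requires the first PCA to finish successfully (so that an $(n+1)$-th even exists). Under the decomposition $A_{n+1}=\{\tau^+_a\circ\theta_{\tau_1}<\tau^-_{a-b}\circ\theta_{\tau_1}\}\cap (A_n\circ\theta_{\tau_2})$ and the strong Markov property at $\tau_2$, the remaining expectation from time $\tau_2$ onwards (with the discount pulled out) is exactly $C_n(S_{\tau_2},S_{\tau_2};b')$, because from the state $(S_{\tau_2},S_{\tau_2})$ the original $(n+1)$-th PCA is the \emph{first} PCA of the next cycle and thus the $n$-th remaining one is what $C_n$ measures. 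Crucially, on a successful first PCA the spectral negativity of $X^1$ forces $X^1$ to meet the upper level $S_{\tau_1}$ continuously, so $S_{\tau_2}=S_{\tau_1}$; therefore $C_n(S_{\tau_2},S_{\tau_2};b')=C_n(S_{\tau_1},S_{\tau_1};b')$, a quantity measurable with respect to $\F_{\tau_1}$.

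Next, I would apply the strong Markov property at $\tau_1 = T_{b'}$. Just after $\tau_1$ the state is $(S_{\tau_1}-a,S_{\tau_1})$ by the cash-infusion convention, and $X^1$ governs the interval $[\tau_1,\tau_2]$. Writing $e^{-q\tau_2}=e^{-qT_{b'}}e^{-q(\tau^+_a\wedge\tau^-_{a-b})\circ\theta_{\tau_1}}$ and using the identity $\E^{s-a,s}[\1_{\{\tau^+_a<\tau^-_{a-b}\}}e^{-q\tau^+_a}]=W_1^{(q)}(b-a)/W_1^{(q)}(b)$ recalled in the proof of Proposition~\ref{C1_0}, the conditional inner expectation reduces to the constant $W_1^{(q)}(b-a)/W_1^{(q)}(b)$, giving
\[
C_{n+1}(s,s;b')=\frac{W_1^{(q)}(b-a)}{W_1^{(q)}(b)}\,\E^{s,s}\!\left[e^{-qT_{b'}}C_n(S_{T_{b'}},S_{T_{b'}};b')\right].
\]

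It then remains to evaluate this expectation by integrating against the joint law of $(T_{b'},S_{T_{b'}},S_{T_{b'}-}-X_{T_{b'}-},\underline{X}-X_{T_{b'}-})$, splitting into the creeping exit $\{S_{T_{b'}}-\underline{X}=b'\}$ and the jump exit, exactly as in the proof of Proposition~\ref{C1_1}. The essential simplification compared with that proof is that here the integrand $C_n(m,m;b')$ depends only on $m=S_{T_{b'}}$; hence in the jump case the double integral over $(y,h)\in E$ separates from the $m$-integration. Adding the creeping density and the factored jump density produces the bracketed coefficient in \eqref{eq:Cn_1}, while the $m$-integral supplies the exponential factor $\exp(-(m-s)W_0^{(q)'}(b')/W_0^{(q)}(b'))$ against $C_n(m,m;b')$.

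The main obstacle I anticipate is the bookkeeping in the first step: properly identifying the restart semantics so that the $\1_{A_{n+1}}$ indicator and the discount $e^{-q\tau_{2n+1}}$ inside $C_{n+1}$ are absorbed, after conditioning at $\tau_2$, into a clean $C_n(S_{\tau_2},S_{\tau_2};b')$. Once one convinces oneself that (i)~$A_{n+1}$ genuinely factors through $\theta_{\tau_2}$ as a successful first PCA together with $A_n$ applied to the shifted process, and (ii)~$S_{\tau_2}=S_{\tau_1}$ on the success event, the rest is a direct rerun of the Laplace-transform and Pistorius identities already used in Propositions~\ref{C1_0} and~\ref{C1_1}.
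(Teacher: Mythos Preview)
Your proposal is correct and follows essentially the same route as the paper: apply the strong Markov property at $\tau_2$ to obtain $C_{n+1}(s,s;b')=\E^{s,s}[e^{-q\tau_2}\1_{A_2}C_n(S_{\tau_2},S_{\tau_2};b')]$, use spectral negativity to replace $S_{\tau_2}$ by $S_{T_{b'}}$, then condition at $T_{b'}$ to extract the factor $W_1^{(q)}(b-a)/W_1^{(q)}(b)$, and finally integrate $C_n(S_{T_{b'}},S_{T_{b'}};b')$ against the Pistorius densities from Proposition~\ref{C1_1}. Your explicit remark that the integrand depends only on $m$, so the $(y,h)$-integral over $E$ separates from the $m$-integral, is exactly what collapses the two Pistorius contributions into the single bracketed coefficient in \eqref{eq:Cn_1}; the paper leaves this step implicit.
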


\begin{proof}
The first-time PCA ends at time $t=\tau_2$. Since $X^1$ only have negative jumps by definition, we have $S_{\tau_2}=S_{T_{b'}}$. Hence
\begin{eqnarray*}
C_{n+1}(s,s;b')&=&
\E^{s,s}\left[ e^{-q\tau_2}\1_{A_2}C_n(S_{\tau_2},S_{\tau_2};b') \right] \\
&=&\E^{s,s}\left[ e^{-qT_{b'}}C_n(S_{T_{b'}},S_{T_{b'}};b') \left( e^{-q(\tau^+_a\wedge\tau^-_{a-b})}\1_{\{\tau^+_a<\tau^-_{a-b}\}} \right)\circ\theta_{T_b'} \right]\\
&=& \E^{s,s}\left[e^{-qT_{b'}}C_n(S_{T_{b'}},S_{T_{b'}};b')\right]\E^{s, s}\left[ \left( e^{-q(\tau^+_a\wedge\tau^-_{a-b})}\1_{\{\tau^+_a<\tau^-_{a-b}\}} \right)\circ\theta_{T_b'}  \Big|\mathcal{F}_{T_b'}\right]\\
&=&\E^{s-a, s}\left[ e^{-q(\tau^+_a\wedge\tau^-_{a-b})}\1_{\{\tau^+_a<\tau^-_{a-b}\}} \right]\Bigg(\int_s^{\infty}\E^{s,s}\left[e^{-qT_{b'}}\1_{\{ S_{T_{b'}}-\underline{X}=b', S_{T_{b'}}\in \diff m \}}\right]C_1(m,m;b')\\
&&+\iiint_D\E^{s,s}\left[e^{-qT_{b'}}\1_{\{ S_{T_{b'}-}-X_{T_{b'}-}\in\diff y, \underline{X}-X_{T_{b'}-}\in \diff h, S_{T_{b'}-}\in \diff m \}}\right]C_1(m,m;b')\Bigg).
\end{eqnarray*}
From Theorem 1 and 2 in Pistrorius \cite{Pistorius_2007}, we have \eqref{eq:Cn_1}.
\end{proof}
As for the finiteness of $C_n(s,s;b')$, the following remark can be shown in the same way as Remark \ref{convergence}.
\begin{remark}
{\rm
If $C_n(m,m;b')<\infty$ for $m\in[s,\infty)$, then $C_{n+1}(s,s;b')<\infty$. Hence by Remark \ref{convergence}, if $ 1-W_0^{(q)'}(b')/W_0^{(q)}(b')<0$, then $C_n(s,s;b')<\infty$ for every $n\geq 1$.
}
\end{remark}

Since we already calculated $C_1$ in the previous subsection, $C_n$ is obtained by repeatedly using this proposition when $S_0=X_0$. The following two propositions are for the other cases; $S_0-X_0\geq b'$ and $S_0-X_0\in(0,b')$. We skip the proofs here since the essential techniques used  are the same as in the propositions above. Note that the results of Proposition \ref{Cn_1} and Proposition \ref{Cn_2} are needed for the calculation in Proposition \ref{Cn_3}.
\begin{proposition}\label{Cn_2}
If $S_0-X_0\geq b'$, then
\begin{align}\label{eq:Cn_2}
C_{n+1}(x,s;b')=\frac{W_1^{(q)}(b-a)}{W_1^{(q)}(b)}C_n(s,s;b').
\end{align}
\end{proposition}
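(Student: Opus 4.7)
The plan is to reduce $C_{n+1}(x,s;b')$ to $C_n(s,s;b')$ via a single application of the strong Markov property at $\tau_2$. Because $s-x \geq b'$, we have $T_{b'}=0$, hence $\tau_1=0$ and the first PCA fires at time $0$. By the redefinition made in Section \ref{sec:model}, $X_{\tau_1}$ is overwritten to $s-a$, and the process then evolves as $X^1$ starting from $s-a$ until $\tau_2 = \tau^+_a \wedge \tau^-_{a-b}$. The bank survives the first PCA precisely on $\{\tau^+_a < \tau^-_{a-b}\}$; the $(n{+}1)$th PCA can only occur on the event $A_{n+1}$, which factors as the intersection of this survival event and the shifted event governing the subsequent $n-1$ recoveries after $\tau_2$.

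The key observation is that on the survival event, $(X_{\tau_2}, S_{\tau_2}) = (s,s)$. Since $X^1$ is spectrally negative, it reaches $s$ continuously at $\tau^+_a$, so $X_{\tau_2}=s$; and because $X^1$ stays strictly below $s$ on $[0,\tau^+_a)$, the running maximum is never updated during the first PCA, giving $S_{\tau_2}=S_0=s$. Applying the strong Markov property at $\tau_2$ and recognizing the post-$\tau_2$ remainder as $C_n(S_{\tau_2},S_{\tau_2};b')$ evaluated at the deterministic state $(s,s)$, I obtain
\begin{align*}
C_{n+1}(x,s;b')
&= \E^{x,s}\!\left[\,\1_{\{\tau^+_a < \tau^-_{a-b}\}}\, e^{-q\tau_2}\, C_n(S_{\tau_2},S_{\tau_2};b')\right]\\
&= C_n(s,s;b')\cdot \E^{s-a,s}\!\left[\,\1_{\{\tau^+_a < \tau^-_{a-b}\}}\, e^{-q\tau^+_a}\right].
\end{align*}
Substituting the two-sided exit identity recalled in the proof of Proposition \ref{C1_0}, namely $\E^{s-a,s}[\1_{\{\tau^+_a<\tau^-_{a-b}\}}e^{-q\tau^+_a}] = W_1^{(q)}(b-a)/W_1^{(q)}(b)$, yields \eqref{eq:Cn_2}.

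No genuine obstacle is expected: all ingredients are already assembled. The single subtle point is the no-overshoot property forcing $S_{\tau_2}=s$ on the survival event; this relies crucially on the spectral negativity of $X^1$ (no upward jumps can temporarily lift the running maximum beyond $s$) and must be invoked explicitly so that the post-$\tau_2$ cost equals the deterministic constant $C_n(s,s;b')$, which is what permits the clean scalar factorization in \eqref{eq:Cn_2}. Were $X^1$ to possess positive jumps, $S_{\tau_2}$ would be random and one would instead be forced to integrate $C_n(m,m;b')$ against the joint law of $(S_{\tau_2},\tau_2)$, as happens in Proposition \ref{Cn_1}.
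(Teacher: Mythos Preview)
Your proposal is correct and follows precisely the route the paper intends: the paper explicitly omits the proof, stating that the essential techniques are the same as in Proposition~\ref{Cn_1}, and your argument is exactly that proof specialized to the case $T_{b'}=0$, so that $S_{T_{b'}}=s$ is deterministic and the integral over $m$ in \eqref{eq:Cn_1} collapses to the scalar $C_n(s,s;b')$. Your emphasis on spectral negativity to guarantee $S_{\tau_2}=s$ on the survival event mirrors the paper's own use of this fact in the proof of Proposition~\ref{Cn_1} (``Since $X^1$ only have negative jumps by definition, we have $S_{\tau_2}=S_{T_{b'}}$'').
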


\begin{proposition}\label{Cn_3}
If $S_0-X_0\in(0,b')$, then
\begin{align}\label{eq:Cn_3}
C_n(x,s;b')&= \frac{\sigma^2}{2}\left(W_0^{(q)}(b'-s+x)-\frac{W_0^{(q)'}(b')}{W_0^{(q)}(b')}W_0^{(q)}(b'-s+x)\right)C_n(s-b',s;b')\\
&+\iint_E \Pi(\diff h)\diff y\left( \frac{W_0^{(q)}(b'-s+x)}{W_0^{(q)}(b')}W_0^{(q)}(y)-W_0^{(q)}(y-s+x) \right)C_n(s-y+h,s;b')\nonumber\\
&+\frac{W_0^{(q)}(b-s+x)}{W_0^{(q)}(b)}C_n(s,s;b').\nonumber
\end{align}
where $C_n(\cdot, \cdot; b')$'s on the right-hand side can be computed by Propositions \ref{Cn_1} and \ref{Cn_2}.
\end{proposition}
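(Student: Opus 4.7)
The plan is to follow the template of the proof of Proposition \ref{C1_2} essentially verbatim, since the only difference is that the remaining-cost quantity at the restart state is $C_n$ instead of $C^0_1$ or $C^1_1$. First I would introduce the exit time $\sigma := \tau^+_{s-x}\wedge \tau^-_{s-x-b'}$, the first time $X^0$ leaves $(s-b',s)$ starting from $x$. Because $s - X_t < b'$ for every $t < \sigma$, no PCA event has been triggered on $[0,\sigma)$, so all costs recorded under $\1_{A_n}$ accrue strictly after $\sigma$ and there is no boundary contribution on $[0,\sigma)$ to track.

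Next, applying the strong Markov property of $X^0$ at $\sigma$ yields the three-way decomposition
\begin{align*}
C_n(x,s;b') &= \E^{x,s}\bigl[e^{-q\tau^+_{s-x}}\1_{\{\tau^+_{s-x}<\tau^-_{s-x-b'}\}}\bigr]\,C_n(s,s;b') \\
&\quad + \E^{x,s}\bigl[e^{-qT_{b'}}\1_{\{S_{T_{b'}}-\underline X = b',\,S_{T_{b'}}=s\}}\bigr]\,C_n(s-b',s;b') \\
&\quad + \iint_E \E^{x,s}\bigl[e^{-qT_{b'}}\1_{\{(S_{T_{b'}-}-X_{T_{b'}-},\,\underline X - X_{T_{b'}-})\in\diff y\times\diff h,\,S_{T_{b'}-}=s\}}\bigr]\,C_n(s-y+h,s;b'),
\end{align*}
where the three terms correspond respectively to (i) $X$ reaches the running maximum $s$ before the trigger level $s-b'$, (ii) $X$ crosses $s-b'$ continuously, and (iii) $X$ crosses $s-b'$ via an overshooting jump. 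In each scenario the post-$\sigma$ state is a deterministic function of the exit data, namely $(s,s)$, $(s-b',s)$ or $(s-y+h,s)$, so the Markov property converts the tail expectation into $C_n$ evaluated at that state.

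Finally, I would substitute the three distributional identities that are already cited in the proof of Proposition \ref{C1_2}: the standard two-sided exit formula for the prefactor of $C_n(s,s;b')$, and the identities from Theorems 1 and 2 of Pistorius \cite{Pistorius_2007} for the continuous-crossing and overshooting terms at $T_{b'}$. No further computation is needed and \eqref{eq:Cn_3} drops out.

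The only real delicacy is the opening observation that $\1_{A_n}$ is measurable with respect to the trajectory after $\sigma$ (because no PCA can fire before $\sigma$ when $s-x\in(0,b')$); this is what allows the clean application of the strong Markov property and the identification of the conditional tail expectation with $C_n$ at the restart point. Once this is noted, the argument is a pure copy of the proof of Proposition \ref{C1_2} with $C^0_1$ and $C^1_1$ replaced by $C_n$ in the appropriate slots, which is presumably why the authors chose to skip the proof.
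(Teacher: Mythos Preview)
Your proposal is correct and is exactly the argument the paper has in mind: the authors explicitly skip the proof, noting that ``the essential techniques used are the same as in the propositions above,'' and your write-up is precisely the adaptation of the proof of Proposition~\ref{C1_2} with $C^0_1$ and $C^1_1$ replaced by the appropriate values of $C_n$. The key observation you highlight---that no PCA is triggered before the exit time $\sigma$, so that $\1_{A_n}$ and all cost contributions lie in the post-$\sigma$ future and the strong Markov property applies cleanly---is indeed the only point that deserves a moment's thought.
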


\section{Appendix} \label{subsec:scale_functions}
\subsection{Scale functions} Associated with every spectrally negative \lev process, there exists a (q-)scale function
%
\begin{align*}
W^{(q)}: \R \mapsto \R; \quad q\ge 0,
\end{align*}
that is continuous and strictly increasing on $[0,\infty)$ and is
uniquely determined by
\begin{align*}
\int_0^\infty e^{-\beta x} W^{(q)}(x) \diff x = \frac 1
{\psi(\beta)-q}, \qquad \beta > \Phi(q).
\end{align*}

Fix $a > x > 0$.  If $\tau_a^+$ is the first time the process goes above $a$ and $\tau_0$ is the first time it goes below zero, then we have
\begin{align}\label{eq:exit-time}
\E^x \left[ e^{-q \tau_a^+} 1_{\left\{ \tau_a^+ < \tau_0, \, \tau_a^+ < \infty
\right\}}\right] = \frac {W^{(q)}(x)}  {W^{(q)}(a)} \quad
\textrm{and}  \quad \E^x \left[ e^{-q  \tau_0} 1_{\left\{ \tau_a^+ >
 \tau_0, \, \tau_0 < \infty \right\}}\right] = Z^{(q)}(x) - Z^{(q)}(a) \frac {W^{(q)}(x)}
{W^{(q)}(a)},
\end{align}
where
\begin{align*}
Z^{(q)}(x) := 1 + q \int_0^x W^{(q)}(y) \diff y, \quad x \in \R.
\end{align*}
Here we have
\begin{equation}\label{eq:at-zero}
W^{(q)}(x)=0 \quad\text{on} \quad  (-\infty,0)\quad\text{ and}\quad
Z^{(q)}(x)=1 \quad\text{on}\quad (-\infty,0].
\end{equation}
We also have
\begin{align}
\E^x \left[ e^{-q  \tau_0} \right] = Z^{(q)}(x) - \frac q {\Phi(q)} W^{(q)}(x), \quad x > 0. \label{laplace_tau_0}
\end{align}

 In particular, $W^{(q)}$ is continuously differentiable on $(0,\infty)$ if $\Pi$ does not have atoms and $W^{(q)}$ is twice-differentiable on $(0,\infty)$ if $\sigma > 0$; see, e.g., \cite{Chan_2009}.  Throughout this paper, we assume the former.
\begin{assump}
We assume that $\Pi$ does not have atoms.
\end{assump}

Fix $q > 0$.  The scale function increases exponentially;
\begin{align}
W^{(q)} (x) \sim \frac {e^{\Phi(q) x}} {\psi'(\Phi(q))} \quad
\textrm{as } \; x \uparrow \infty.
\label{scale_function_asymptotic}
\end{align}
There exists a (scaled) version of the scale function $ W_{\Phi(q)}
= \{ W_{\Phi(q)} (x); x \in \R \}$ that satisfies
\begin{align}
W_{\Phi(q)} (x) = e^{-\Phi(q) x} W^{(q)} (x), \quad x \in \R \label{W_scaled}
\end{align}
and
\begin{align*}
\int_0^\infty e^{-\beta x} W_{\Phi(q)} (x) \diff x &= \frac 1
{\psi(\beta+\Phi(q))-q}, \quad \beta > 0.
\end{align*}
Moreover $W_{\Phi(q)} (x)$ is increasing, and as is clear from
\eqref{scale_function_asymptotic},
\begin{align}
W_{\Phi(q)} (x) \uparrow \frac 1 {\psi'(\Phi(q))} \quad \textrm{as }
\; x \uparrow \infty. \label{scale_function_asymptotic_version}
\end{align}

Regarding its behavior in the neighborhood of zero, it is known that
\begin{align}
W^{(q)} (0) = \left\{ \begin{array}{ll} 0, & \textrm{unbounded
variation} \\ \frac 1 {\mu}, & \textrm{bounded variation}
\end{array} \right\} \quad \textrm{and} \quad W^{(q)'} (0+) =
\left\{ \begin{array}{ll}  \frac 2 {\sigma^2}, & \sigma > 0 \\
\infty, & \sigma = 0 \; \textrm{and} \; \Pi(0,\infty) = \infty \\
\frac {q + \Pi(0,\infty)} {\mu^2}, & \textrm{compound Poisson}
\end{array} \right\}; \label{at_zero}
\end{align}
see Lemmas 4.3-4.4 of
\cite{Kyprianou_Surya_2007}.
For a comprehensive account of the scale function, see
\cite{Bertoin_1996,Bertoin_1997, Kyprianou_2006, Kyprianou_Surya_2007}. See \cite{Egami_Yamazaki_2010_2, Surya_2008} for numerical methods for computing the
scale function.

\bibliographystyle{plain}
\def\cprime{$'$}

\end{document}